\numberwithin{equation}{section}
\title{Data-based Automatic Discretization of Nonparametric Distributions}
\author{Alexis Akira Toda\thanks{Department of Economics, University of California San Diego. Email: \href{mailto:atoda@ucsd.edu}{atoda@ucsd.edu}.}}
\begin{document}
\maketitle

\begin{abstract}
Although using non-Gaussian distributions in economic models has become increasingly popular, currently there is no systematic way for calibrating a discrete distribution from the data without imposing parametric assumptions. This paper proposes a simple nonparametric calibration method based on the \cite{GolubWelsch1969} algorithm for Gaussian quadrature. 
Application to an optimal portfolio problem suggests that assuming Gaussian instead of nonparametric shocks leads to up to 17\% overweighting in the stock portfolio because the investor underestimates the probability of crashes.

\medskip

{\bf Keywords:} calibration, discrete approximation, Gaussian quadrature.

\medskip

{\bf JEL codes:} C63, C65, G11.
\end{abstract}

\section{Introduction}

This paper studies the following problem, which applied theorists often encounter. A researcher would like to calibrate the parameters of a stochastic model. One of the model inputs is a probability distribution of shocks, which is to be approximated by a discrete distribution. Due to computational considerations, the researcher would like this distribution to have as few support points (nodes) as possible, say five. Given the data of shocks, how should the researcher calibrate the nodes and probabilities of this five-point distribution?

While there are many established methods for discretizing processes with Gaussian shocks such as \cite{tauchen1986-EL}, \cite{tauchen-hussey1991}, and \cite{rouwenhorst1995},\footnote{See \cite{FarmerToda2017QE} and the references therein for a detailed literature review.} discretizing non-Gaussian distributions remains relatively unexplored. However, it has become increasingly common in economics to study models with non-Gaussian shocks. For example, the rare disasters model \citep{rietz1988,Barro2006QJE,Gabaix2012RareDisaster} uses rare but large downward jumps to explain asset pricing puzzles. One issue with discretizing non-Gaussian distributions is how to calibrate them. If we have a parametric density, it is possible to discretize it using the Gaussian quadrature as in \cite{miller-rice1983} or the maximum entropy method as in \cite{TanakaToda2013EL,TanakaToda2015SINUM} provided that we can compute some moments. However, it is not obvious how to obtain an $N$-point distribution that approximates the data well without imposing parametric assumptions. Because the degree of freedom of an $N$-point distribution is large ($2N-1$), providing an automatic discretization method is valuable because it removes the arbitrariness of calibration.

Given the data, this paper proposes a simple method for automatically calibrating a discrete distribution with a specified number of grid points. The method is based on the observation that to compute the nodes and weight of the $N$-point Gaussian quadrature with some weighting function using the \cite{GolubWelsch1969} algorithm, one only needs to know the moments of the weighting function up to order $2N$. Therefore a natural way to discretize a nonparametric distribution is simply to feed the $2N$ sample moments into the Golub-Welsch algorithm. Since this method does not involve optimization (it is a matter of solving for the eigenvalues/vectors of a sparse matrix), the implementation is easy and fast.

As an application, I discretize the U.S.\ historical stock returns data and solve an optimal portfolio problem with constant relative risk aversion utility. I consider two cases in which the investor uses the nonparametric and Gaussian densities. I show that when the investor incorrectly believes that the stock returns distribution is lognormal, the stock portfolio is overweighted by up to 17\% because he underestimates the probability of crashes. These examples show that the choice of the calibration method may matter quantitatively.

\subsection{Related literature}
The closest paper to mine is \cite{miller-rice1983}, who use the Gaussian quadrature to discretize distributions. While they consider only the discretization of parametric distributions, my focus is on the discretization of nonparametric distributions estimated from data. \cite{TanakaToda2013EL} consider the discretization of distributions on preassigned nodes by matching the moments using the maximum entropy principle, and \cite{TanakaToda2015SINUM} prove convergence and obtain an error estimate. \cite{FarmerToda2017QE} consider the discretization of general non-Gaussian Markov processes by applying the Tanaka-Toda method to conditional distributions. In one of the applications, they discretize a nonparametric density on a preassigned grid by approximating it with a Gaussian mixture. Since computing the nodes and weights of Gaussian quadrature does not require optimizing over parameters (unlike the maximum likelihood estimation of Gaussian mixture parameters or solving the maximum entropy problem), my method is easier and faster to implement, and the grid is chosen endogenously. On the other hand, the \cite{FarmerToda2017QE} method can discretize general Markov processes, whereas the proposed method in this paper is designed to discretize a single distribution.

\section{Discretizing a nonparametric density}\label{sec:method}

Suppose for the moment that the nonparametric density $f(x)$ is known. Since stochastic models often involve expectations, we would like to find nodes $\set{x_n}_{n=1}^N$ and weights $\set{w_n}_{n=1}^N$ such that
\begin{equation}
\E[g(X)]=\int_{-\infty}^\infty g(x)f(x)\diff x\approx \sum_{n=1}^Nw_ng(x_n),\label{eq:QuadForm}
\end{equation}
where $g$ is a general integrand and $X$ is a random variable with density $f(x)$. The right-hand side of \eqref{eq:QuadForm} defines an $N$-point quadrature formula.

When \eqref{eq:QuadForm} is exact (\ie, $\approx$ becomes $=$) for all polynomials of degree up to $D$, we say that the quadrature formula has degree of exactness $D$. Since the degree of freedom in an $N$-point quadrature formula is $2N$ (because there are $N$ nodes and $N$ weights), we cannot expect to integrate more than $2N$ monomials $f(x)=1,x,\dots,x^{2N-1}$ exactly. When the quadrature formula \eqref{eq:QuadForm} is exact for these monomials, or equivalently when it has degree of exactness $2N-1$, we call the formula \emph{Gaussian}. The following Golub-Welsch algorithm provides an efficient way to compute the nodes and weights of the Gaussian quadrature. (Appendix \ref{sec:GQ} provides more theoretical background.)

\begin{framed}
\begin{algorithm}[\citealp{GolubWelsch1969}]\label{alg:GW}
\quad
\begin{enumerate}
\item Select a number of quadrature nodes $N\in \N$.
\item For $k=0,1,\dots,2N$, compute the $k$-th moment of the density $m_k=\int x^kf(x)\diff x$.
\item Define the matrix of moments $M=(M_{ij})_{1\le i,j\le N+1}$ by $M_{ij}=m_{i+j-2}$.
\item Compute the Cholesky factorization $M=R'R$. Let $R=(r_{ij})_{1\le i,j\le N+1}$.
\item Define $\alpha_1=r_{12}/r_{11}$, $\alpha_n=\frac{r_{n,n+1}}{r_{nn}}-\frac{r_{n-1,n}}{r_{n-1,n-1}}$ ($n=2,\dots,N$), and $\beta_n=\frac{r_{n+1,n+1}}{r_{nn}}$ ($n=1,\dots,N-1$). Define the $N\times N$ symmetric tridiagonal matrix
\begin{equation}
T_N=\begin{bmatrix}
\alpha_1 & \beta_1 & 0 & \cdots & 0\\
\beta_1 & \alpha_2 & \beta_2 & \ddots & \vdots \\
0 & \beta_2 & \alpha_3 & \ddots & 0 \\
\vdots & \ddots & \ddots & \ddots & \beta_{N-1} \\
0 & \cdots & 0 & \beta_{N-1} & \alpha_N
\end{bmatrix}.\label{eq:TN}
\end{equation}
\item Compute the eigenvalues $\set{x_n}_{n=1}^N$ of $T_N$ and the corresponding eigenvectors $\set{v_n}_{n=1}^N$. $\set{x_n}_{n=1}^N$ are the nodes of the Gaussian quadrature and the weights $\set{w_n}_{n=1}^N$ are given by $w_n=m_0v_{n1}^2/\norm{v_n}^2>0$, where $v_n=(v_{n1},\dots,v_{nn})'$.
\end{enumerate}
\end{algorithm}
\end{framed}

Once we compute the nodes $\set{x_n}_{n=1}^N$ and weights $\set{w_n}_{n=1}^N$, we can use them as the discrete approximation of the density $f$.

Note that the only inputs to the Golub-Welsch algorithm \ref{alg:GW} are the number of nodes $N$ and the moments $m_k=\int x^kf(x)\diff x$ of the density $f$, where $k=0,\dots,2N$. Therefore a natural idea for discretizing a nonparametric density given the data $\set{x_i}_{i=1}^I$ is simply to feed the sample moments into the Golub-Welsch algorithm \ref{alg:GW}. Summarizing the above observations, we obtain the following algorithm for the data-based automatic discretization of nonparametric distributions.

\begin{framed}
\begin{algorithm}[Automatic discretization of nonparametric distributions]\label{alg:disc}
\quad
\begin{enumerate}
\item Given the data $\set{x_i}_{i=1}^I$ and the desired number of discrete points $N$, for $k=0,\dots,2N$ compute the $k$-th sample moment
\begin{equation}
\widehat{m}_k=\frac{1}{I}\sum_{i=1}^Ix_i^k.\label{eq:samplemoment}
\end{equation}
\item Feed these moments $\set{\widehat{m}_k}_{k=0}^{2N}$ into the Golub-Welsch algorithm \ref{alg:GW} to compute the nodes $\set{\bar{x}_n}_{n=1}^N$ and weights $\set{w_n}_{n=1}^N$. The desired discretization assigns probability $w_n$ on the point $\bar{x}_n$.
\end{enumerate}
\end{algorithm}
\end{framed}

Because the $N$-point Gaussian quadrature has degree of exactness $2N-1$, by construction the $N$-point discretization matches the sample moments of data up to order $2N-1$ (and up to numerical error). Since by the Gauss-Markov theorem the sample moment \eqref{eq:samplemoment} is the best linear unbiased estimator (BLUE) of the population moment, that is, $\widehat{m}_k$ has the minimum mean-squared error among all estimates of the form $\sum_{i=1}^Ia_ix_i^k$, where $\set{a_i}_{i=1}^I$ are some weights, Algorithm \ref{alg:disc} is in a sense optimal.

Appendix \ref{sec:accuracy} shows that the accuracy of the proposed method exceeds that of using a parametric distribution when the latter is misspecified.

\section{Application: optimal portfolio problem}\label{subsec:appl.2}

In this section I illustrate the usefulness of the proposed method using minimal economic examples.
Consider a CRRA investor with relative risk aversion $\gamma>0$. Letting $R>0$ be the gross stock return, $R_f>0$ be the gross risk-free rate, and $\theta$ be the fraction of wealth (portfolio share) invested in the stock, the investor's optimal portfolio problem is
\begin{equation}
\max_\theta \frac{1}{1-\gamma}\E[(R\theta+R_f(1-\theta))^{1-\gamma}].\label{eq:portprob}
\end{equation}

I obtain the annual data on U.S.\ nominal stock returns, risk-free rate, and inflation for the period 1927--2016 from the spreadsheet of Amit Goyal.\footnote{The spreadsheet is at \url{http://www.hec.unil.ch/agoyal/docs/PredictorData2016.xlsx}. Using monthly or quarterly data give qualitatively similar results, though slightly less extreme quantitatively.} For the stock returns I use the CRSP volume-weighted index including dividends. I convert these returns into real log returns and calibrate the log risk free rate as the sample average. The result is $R_f=1.0045$. I then apply Algorithm \ref{alg:disc} to the log excess returns $\log R-\log R_f$ to obtain a discrete distribution with nodes $\set{\bar{x}_n}_{n=1}^N$ and weights $\set{w_n}_{n=1}^N$, where I choose the number of points $N=5$ (increasing the number of points further does not change the results). The gross stock return in state $n$ is defined by $R_n=R_f\e^{\bar{x}_n}$, which occurs with probability $w_n$. Finally, I numerically solve the optimal portfolio problem \eqref{eq:portprob}. Figure \ref{fig:portprob} shows the results when we change the relative risk aversion in the range $\gamma\in [1,7]$.

\begin{figure}[!htb]
\centering
\begin{subfigure}{0.48\linewidth}
\includegraphics[width=\linewidth]{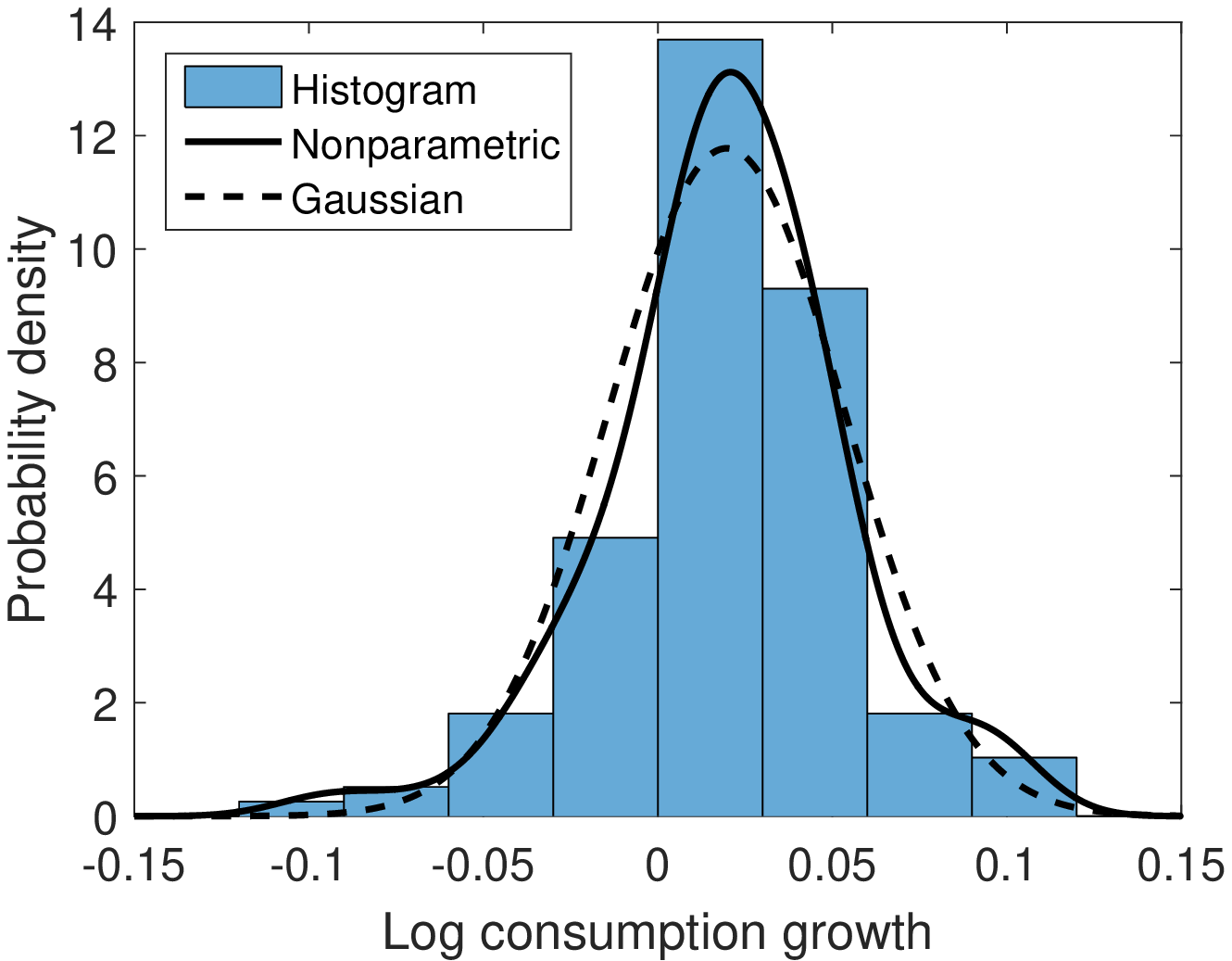}
\caption{Histogram and densities.}\label{fig:hist}
\end{subfigure}
\begin{subfigure}{0.48\linewidth}
\includegraphics[width=\linewidth]{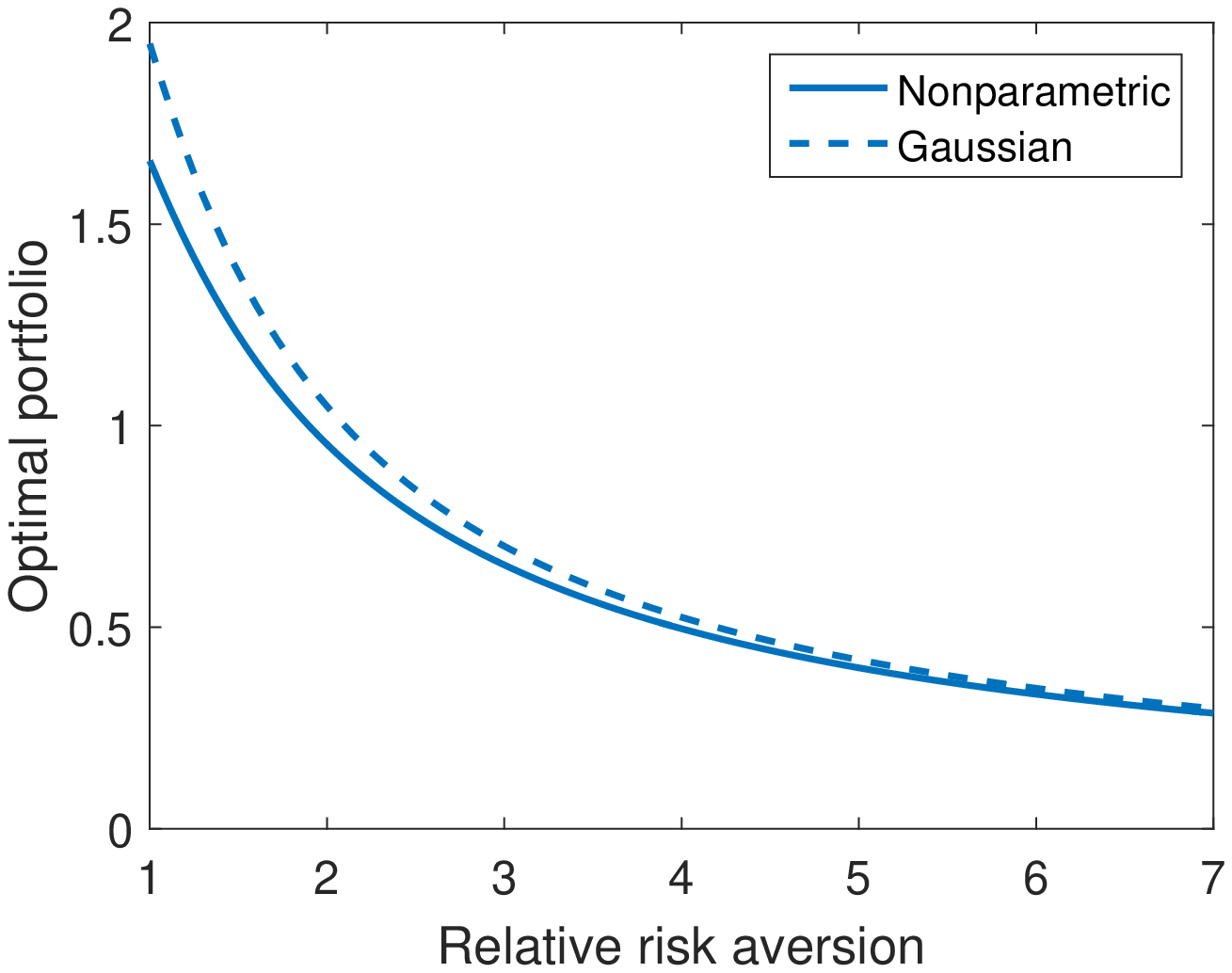}
\caption{Optimal portfolio.}\label{fig:portfolio}
\end{subfigure}
\begin{subfigure}{0.48\linewidth}
\includegraphics[width=\linewidth]{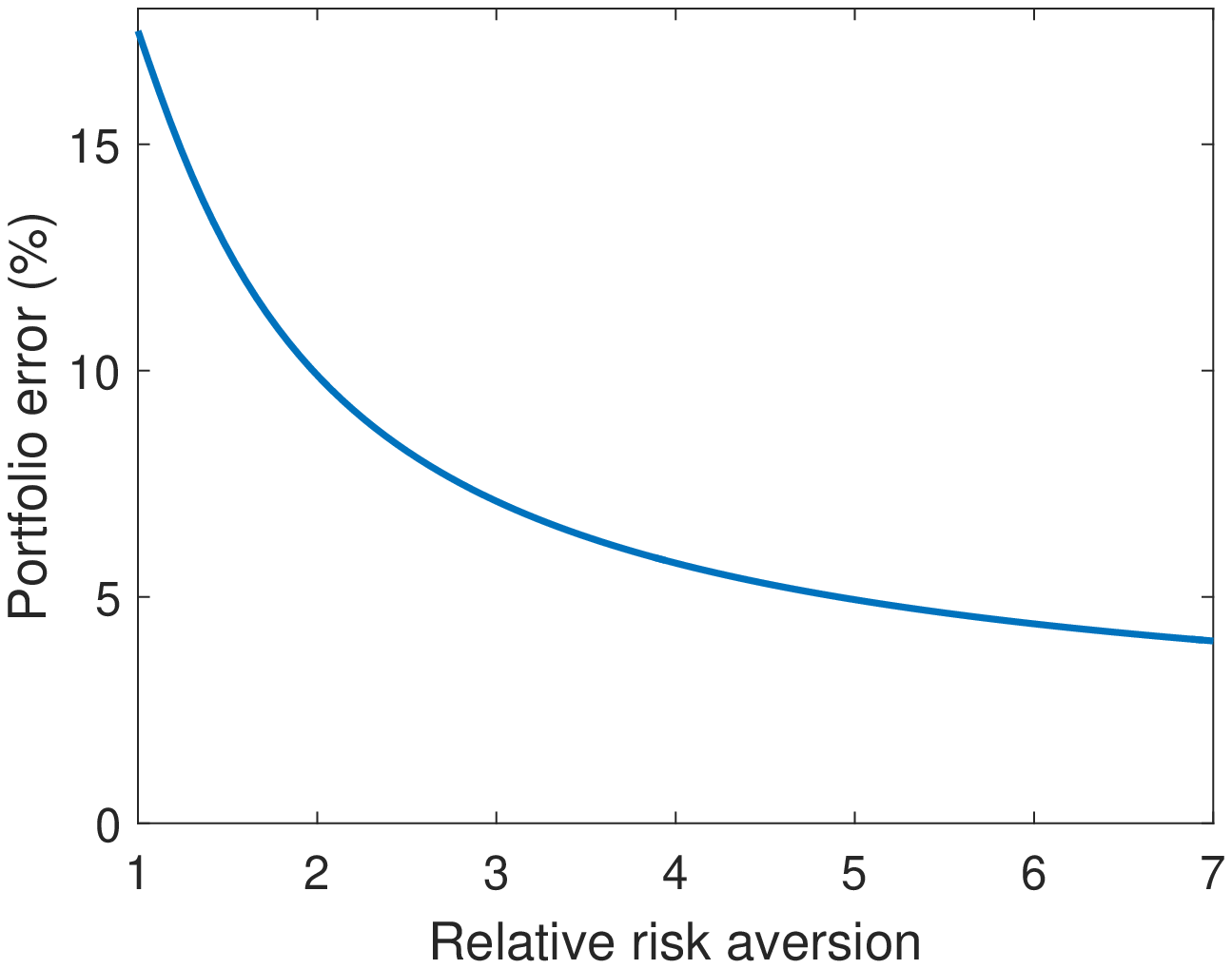}
\caption{Portfolio error.}\label{fig:porterror}
\end{subfigure}
\caption{Excess returns distribution and numerical solution.}\label{fig:portprob}
\end{figure}

Figure \ref{fig:hist} shows the histogram of the log excess returns distribution as well as the nonparametric kernel density estimator and the Gaussian distribution fitted by maximum likelihood. We can see that the histogram and the nonparametric density have a long left tail corresponding to stock market crashes, which the Gaussian distribution misses. Figure \ref{fig:portfolio} shows the optimal portfolio $\theta$ for the two models. We can see that when the investor incorrectly believes that the stock returns distribution is lognormal, he overweights the stock portfolio because he underestimates the probability of crashes. Figure \ref{fig:porterror} shows the percentage of this overweight (portfolio error) $\theta_\mathrm{G}/\theta_\mathrm{NP}-1$, where G and NP stand for Gaussian and nonparametric densities. The portfolio error is substantial, in the range of 4--17\%.

\section{Concluding remarks}

This paper has proposed a simple, automatic method for discretizing a nonparametric distribution, given the data. Using an asset pricing model and an optimal portfolio problem as a laboratory, I showed that the error from using a parametric distribution (such as the Gaussian distribution) can be substantial. 

A natural extension is to consider the discretization of Markov processes with nonparametric shocks. For example, one may be tempted to apply the kernel density estimation and Gaussian quadrature in the \cite{tauchen-hussey1991} method to discretize the AR(1) process
$$x_t=\rho x_{t-1}+\varepsilon_t,$$
where $\abs{\rho}<1$ and the innovations $\set{\varepsilon_t}_{t=0}^\infty$ are independent and identically distributed according to some probability density function $f$. However, it is well known that the Tauchen-Hussey method is not accurate when the persistence $\rho$ is moderately high \citep{floden2008}. Using the AR(1) asset pricing model in the Online Appendix of \cite{FarmerToda2017QE} to evaluate the solution accuracy, I found that the Gaussian quadrature-based methods for discretizing Markov processes is even less accurate when the shock distribution is nonparametric. Therefore for such processes, it is preferable to use the \cite{FarmerToda2017QE} maximum entropy method with an even-spaced grid (see their Section 4.3.3 for an example).

Finally, although I proposed my method as a tool for discretization, it can also be used as a quadrature method. For example, \cite{PohlSchmeddersWilms2018} solve the \cite{bansal-yaron2004} long run risks model using the projection method and Gauss-Hermite quadrature, but that is because the model is assumed to have Gaussian shocks. If instead a researcher wishes to use nonparametric shocks, my method can be directly used to construct a quadrature rule from data.



\appendix

\newpage

\noindent
{\Large \bf Online Appendix}

\section{Accuracy}\label{sec:accuracy}
As in any numerical method, evaluating the accuracy is very important. In this section I evaluate the accuracy of the proposed method using the optimal portfolio problem in Section \ref{subsec:appl.2} as a laboratory.

I design the numerical experiment as follows. First I fit a Gaussian mixture distribution with two components to the annual log excess returns data. The proportion, mean, and standard deviation of each mixture components are $p=(p_j)=(0.1392,0.8608)$, $\mu=(\mu_j)=(-0.2242,0.1064)$, and $\sigma=(\sigma_j)=(0.2164,0.1453)$, respectively. Next, I assume that the true excess returns distribution is this Gaussian mixture and solve the optimal portfolio problem for relative risk aversion $\gamma\in\set{2,4,6}$ using the Gaussian quadrature (Golub-Welsch algorithm \ref{alg:GW}) for Gaussian mixtures with 11 points. Finally, I generate random numbers from this Gaussian mixture with various sample sizes, discretize these distributions with various methods, and compute the optimal portfolio. I repeat this procedure with $M=\text{1,000}$ Monte Carlo replications and compute the relative bias and mean absolute error (MAE)
\begin{subequations}
\begin{align}
\mathrm{Bias}&=\frac{1}{M}\sum_{m=1}^M\left(\widehat{\theta}_m/\theta^*-1\right), \label{eq:bias}\\
\mathrm{MAE}&=\frac{1}{M}\sum_{m=1}^M\abs{\widehat{\theta}_m/\theta^*-1}, \label{eq:MAE}
\end{align}
\end{subequations}
where $\widehat{\theta}_m$ is the optimal portfolio from simulation $m$ and $\theta^*$ is the theoretical optimal portfolio. For the sample size I consider $T=100, \text{1,000}, \text{10,000}$, and for the number of quadrature nodes I consider $N=3,5,7,9$. For the discretization method I consider three cases. The first is the nonparametric Gaussian quadrature method (Algorithm \ref{alg:disc}), which I refer to as ``NP-GQ''. The second is the Gauss-Hermite quadrature, where the mean and standard deviation are estimated by maximum likelihood. This is the most natural method if the returns distribution is lognormal. The third is the maximum entropy method proposed by \cite{TanakaToda2013EL,TanakaToda2015SINUM} and \cite{FarmerToda2017QE} where the kernel density estimator (with Gaussian kernel) is fed into, which I refer to as ``NP-ME''. For this method one needs to assign the grid and the number of moments to match. Following Corollary 3.5 of \cite{FarmerToda2017QE}, I use an even-spaced grid centered at the sample mean that spans $\sqrt{2(N-1)}$ times the sample standard deviation at both sides, where $N$ is the number of grid points. I match 4 sample moments whenever possible for $N\ge 5$, and otherwise I match 2 sample moments (mean and variance). For more details on the exact algorithm, please refer to \cite{TanakaToda2013EL,TanakaToda2015SINUM} and Sections 2 and 3.2 of \cite{FarmerToda2017QE}.

Tables \ref{t:bias} and \ref{t:MAE} show the relative bias and mean absolute error of the optimal portfolio, respectively. As expected, the optimal portfolio computed using Gauss-Hermite is biased upwards because it uses the Gaussian distribution, which underestimates the probability of crashes. Among the two nonparametric discretization methods, NP-GQ uniformly outperforms NP-ME in terms of bias and mean absolute error, especially when the sample size is small ($T=100$). For $N=3$ grid points, in which case it is impossible to match 4 moments with NP-ME, the proposed NP-GQ method performs significantly better. Finally, increasing $N$ beyond 5 does not improve the bias or the mean absolute error for NP-GQ, which suggests that using a five-point distribution is enough (at least for solving this portfolio problem).

\begin{table}[!htb]
\centering
\caption{Relative bias of the optimal portfolio.}\label{t:bias}
\begin{tabular}{ccccccccccc}
\toprule
\multicolumn{2}{c}{Method} & \multicolumn{3}{c}{NP-GQ} & \multicolumn{3}{c}{Gauss-Hermite} & \multicolumn{3}{c}{NP-ME}\\
$T$ & $N$ & $\gamma=2$ & 4 & 6 & $\gamma=2$ & 4 & 6 & $\gamma=2$ & 4 & 6 \\
\cmidrule(lr){1-2}
\cmidrule(lr){3-5}
\cmidrule(lr){6-8}
\cmidrule(lr){9-11}
\multirow{4}{*}{100} & 3 & 0.054&0.053&0.053&0.168&0.123&0.109&0.140&0.113&0.105\\
& 5 & 0.051&0.053&0.053&0.159&0.123&0.109&0.096&0.082&0.077\\
& 7 & 0.051&0.053&0.053&0.158&0.123&0.109&0.082&0.074&0.070\\
& 9 & 0.051&0.053&0.053&0.157&0.123&0.109&0.076&0.071&0.068\\
\cmidrule(lr){1-2}
\cmidrule(lr){3-5}
\cmidrule(lr){6-8}
\cmidrule(lr){9-11}
\multirow{4}{*}{1,000} & 3 &0.005&0.005&0.005&0.105&0.060&0.047&0.089&0.059&0.050\\
& 5 &0.004&0.005&0.005&0.103&0.060&0.047&0.031&0.019&0.016\\
& 7 &0.004&0.005&0.005&0.103&0.060&0.047&0.022&0.014&0.011\\
& 9 &0.004&0.005&0.005&0.103&0.060&0.047&0.018&0.012&0.010\\
\cmidrule(lr){1-2}
\cmidrule(lr){3-5}
\cmidrule(lr){6-8}
\cmidrule(lr){9-11}
\multirow{4}{*}{10,000} & 3 & 0.001&0.001&0.001&0.098&0.054&0.041&0.084&0.054&0.045\\
& 5 &0.001&0.001&0.001&0.098&0.054&0.041&0.020&0.011&0.008\\
& 7 &0.001&0.001&0.001&0.098&0.054&0.041&0.012&0.006&0.004\\
& 9 &0.001&0.001&0.001&0.098&0.054&0.041&0.008&0.004&0.003\\
\bottomrule
\end{tabular}
\caption*{\footnotesize Note: the table reports the relative bias of the optimal portfolio defined by \eqref{eq:bias}. For discretization methods, ``NP-GQ'' uses Algorithm \ref{alg:disc}, ``Gauss-Hermite'' uses the Gauss-Hermite quadrature (with mean and standard deviation estimated by maximum likelihood), and ``NP-ME'' uses the maximum entropy method with the kernel density estimator. $T$ is the sample size in each simulation. $N$ is the number of nodes in the quadrature formula. $\gamma$ is the relative risk aversion. All results are based on 1,000 Monte Carlo replications.}
\end{table}

\begin{table}[!htb]
\centering
\caption{Relative mean absolute error of the optimal portfolio.}\label{t:MAE}
\begin{tabular}{ccccccccccc}
\toprule
\multicolumn{2}{c}{Method} & \multicolumn{3}{c}{NP-GQ} & \multicolumn{3}{c}{Gauss-Hermite} & \multicolumn{3}{c}{NP-ME}\\
$T$ & $N$ & $\gamma=2$ & 4 & 6 & $\gamma=2$ & 4 & 6 & $\gamma=2$ & 4 & 6 \\
\cmidrule(lr){1-2}
\cmidrule(lr){3-5}
\cmidrule(lr){6-8}
\cmidrule(lr){9-11}
\multirow{4}{*}{100} & 3 & 0.239&0.247&0.249&0.323&0.306&0.301&0.296&0.293&0.292\\
& 5 & 0.236&0.247&0.249&0.314&0.305&0.301&0.267&0.271&0.270\\
& 7 & 0.236&0.247&0.249&0.313&0.305&0.301&0.256&0.264&0.265\\
& 9 & 0.236&0.247&0.249&0.312&0.305&0.301&0.251&0.262&0.263\\
\cmidrule(lr){1-2}
\cmidrule(lr){3-5}
\cmidrule(lr){6-8}
\cmidrule(lr){9-11}
\multirow{4}{*}{1,000} & 3 & 0.068&0.072&0.073&0.125&0.100&0.095&0.112&0.098&0.095\\
& 5 &0.067&0.072&0.073&0.124&0.101&0.095&0.078&0.078&0.078\\
& 7 &0.067&0.072&0.073&0.124&0.101&0.095&0.074&0.076&0.076\\
& 9 &0.067&0.072&0.073&0.124&0.101&0.095&0.072&0.075&0.076\\
\cmidrule(lr){1-2}
\cmidrule(lr){3-5}
\cmidrule(lr){6-8}
\cmidrule(lr){9-11}
\multirow{4}{*}{10,000} & 3 & 0.021&0.023&0.023&0.098&0.056&0.045&0.084&0.056&0.048\\
& 5 &0.021&0.023&0.023&0.098&0.056&0.045&0.029&0.025&0.025\\
& 7 &0.021&0.023&0.023&0.098&0.056&0.045&0.024&0.024&0.024\\
& 9 &0.021&0.023&0.023&0.098&0.056&0.045&0.023&0.023&0.023\\
\bottomrule
\end{tabular}
\caption*{\footnotesize Note: the table reports the relative mean absolute error of the optimal portfolio defined by \eqref{eq:MAE}. See Table \ref{t:bias} for the definition of variables.}
\end{table}

\section{Gaussian quadrature}\label{sec:GQ}

In this appendix we prove some properties of the Gaussian quadrature. For notational simplicity let us omit $a,b$ (so $\int$ means $\int_a^b$) and assume that $\int w(x)x^n\diff x$ exists for all $n\ge 0$. For functions $f,g$, define the inner product $(f,g)$ by
\begin{equation}
(f,g)=\int_a^b w(x)f(x)g(x)\diff x.\label{eq:InnnerProduct}
\end{equation}
As usual, define the norm of $f$ by $\norm{f}=\sqrt{(f,f)}$. The first step is to construct orthogonal polynomials $\set{p_n(x)}_{n=0}^N$ corresponding to the inner product \eqref{eq:InnnerProduct}.

\begin{defn}[Orthogonal polynomial]
The polynomials $\set{p_n(x)}_{n=0}^N$ are called \emph{orthogonal} if
\begin{inparaenum}[(i)]
\item $\deg p_n=n$ and the leading coefficient of $p_n$ is 1, and
\item for all $m\neq n$, we have $(p_m,p_n)=0$.
\end{inparaenum}
\end{defn}
Some authors require that the polynomials are orthonormal, so $(p_n,p_n)=1$. In this paper we normalize the polynomials by requiring that the leading coefficient is 1, which is useful for computation. The following three-term recurrence relation (TTRR) shows the existence of orthogonal polynomials and provides an explicit algorithm for computing them.

\begin{prop}[Three-term recurrence relation, TTRR]\label{prop:TTRR}
Let $p_0(x)=1$, $p_1(x)=x-\frac{(xp_0,p_0)}{\norm{p_0}^2}$, and for $n\ge 1$ define
\begin{equation}
p_{n+1}(x)=\left(x-\frac{(xp_n,p_n)}{\norm{p_n}^2}\right)p_n(x)-\frac{\norm{p_n}^2}{\norm{p_{n-1}}^2}p_{n-1}(x).\label{eq:TTRR}
\end{equation}
Then $p_n(x)$ is the degree $n$ orthogonal polynomial.
\end{prop}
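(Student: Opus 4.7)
\medskip
\noindent\textbf{Proof plan.}
The plan is to proceed by strong induction on $n$, showing simultaneously that $p_n$ as defined is a monic polynomial of degree exactly $n$ and that $(p_n,p_k)=0$ for all $k<n$. The base cases $n=0,1$ are immediate: $p_0=1$ is monic of degree $0$, and $p_1(x)=x-(xp_0,p_0)/\norm{p_0}^2$ is monic of degree $1$ with $(p_1,p_0)=(x,1)-(x,1)=0$ by construction.

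For the inductive step, suppose $p_0,\dots,p_n$ are already known to be pairwise orthogonal and monic of the correct degree. First I would verify the degree and leading coefficient of $p_{n+1}$: the factor $x-(xp_n,p_n)/\norm{p_n}^2$ is monic of degree one, so $\bigl(x-(xp_n,p_n)/\norm{p_n}^2\bigr)p_n(x)$ is monic of degree $n+1$, and subtracting the multiple of $p_{n-1}$ (degree $n-1$) does not affect the leading term. Hence $p_{n+1}$ is monic of degree $n+1$.

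The core of the argument is orthogonality. Taking the inner product of the recurrence with $p_k$ for $k\le n$ gives
\begin{equation*}
(p_{n+1},p_k)=(xp_n,p_k)-\frac{(xp_n,p_n)}{\norm{p_n}^2}(p_n,p_k)-\frac{\norm{p_n}^2}{\norm{p_{n-1}}^2}(p_{n-1},p_k).
\end{equation*}
The key identity is $(xp_n,p_k)=(p_n,xp_k)$, which follows from the definition \eqref{eq:InnnerProduct}. I would then split into three cases. For $k=n$, the first and second terms cancel and the third vanishes because $(p_{n-1},p_n)=0$. For $k=n-1$, I use that $xp_{n-1}$ is monic of degree $n$, so by the induction hypothesis it has the expansion $xp_{n-1}=p_n+q$ with $\deg q\le n-1$, giving $(xp_n,p_{n-1})=(p_n,xp_{n-1})=\norm{p_n}^2$; the second term vanishes and the third term contributes $-\norm{p_n}^2$, yielding cancellation. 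For $k<n-1$, the polynomial $xp_k$ has degree at most $n-1$, so it lies in $\operatorname{span}(p_0,\dots,p_{n-1})$ and is orthogonal to $p_n$; the other two terms also vanish by the inductive orthogonality.

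The main obstacle is making the case $k=n-1$ rigorous, which hinges on the expansion $xp_{n-1}=p_n+q$. This in turn requires knowing that $\set{p_0,\dots,p_n}$ forms a basis for polynomials of degree $\le n$, a consequence of monicity and the distinct degrees of the $p_k$. One also needs $\norm{p_k}^2>0$ for the recurrence to be well-defined, which follows from the positivity of the weight function $w$ and the fact that each $p_k$ is a nonzero polynomial. Once these ingredients are in place the induction closes and delivers the claim.
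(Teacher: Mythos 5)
Your proposal is correct and follows essentially the same route as the paper's proof: induction on $n$ establishing monicity and degree, then splitting the orthogonality check into the cases $k=n$, $k=n-1$ (via the expansion $xp_{n-1}=p_n+q$ with $\deg q\le n-1$), and $k<n-1$ (via $(p_n,xp_k)=0$ since $\deg xp_k<n$). The additional remarks on $\norm{p_k}^2>0$ and the basis property are sensible housekeeping that the paper leaves implicit.
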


\begin{proof}
Let us show by induction on $n$ that
\begin{inparaenum}[(i)]
\item $p_n$ is an degree $n$ polynomial with leading coefficient 1, and
\item $(p_n,p_m)=0$ for all $m<n$.
\end{inparaenum}
The claim is trivial for $n=0$. For $n=1$, by construction $p_1$ is a degree 1 polynomial with leading coefficient 1, and since $p_0(x)=1$, we obtain
$$(p_1,p_0)=\left(\left(x-\frac{(xp_0,p_0)}{\norm{p_0}^2}\right)p_0,p_0\right)=(xp_0,p_0)-(xp_0,p_0)=0.$$

Suppose the claim holds up to $n$. Then for $n+1$, by \eqref{eq:TTRR} the leading coefficient of $p_{n+1}$ is the same as that of $xp_n$, which is 1. If $m=n$, then
\begin{align*}
(p_{n+1},p_n)&=\left(\left(x-\frac{(xp_n,p_n)}{\norm{p_n}^2}\right)p_n-\frac{\norm{p_n}^2}{\norm{p_{n-1}}^2}p_{n-1},p_n\right)\\
&=(xp_n,p_n)-(xp_n,p_n)-\frac{\norm{p_n}^2}{\norm{p_{n-1}}^2}(p_{n-1},p_n)=0.
\end{align*}
If $m=n-1$, then
\begin{align*}
(p_{n+1},p_{n-1})&=\left(\left(x-\frac{(xp_n,p_n)}{\norm{p_n}^2}\right)p_n-\frac{\norm{p_n}^2}{\norm{p_{n-1}}^2}p_{n-1},p_{n-1}\right)\\
&=(xp_n,p_{n-1})-\frac{(xp_n,p_n)}{\norm{p_n}^2}(p_n,p_{n-1})-\norm{p_n}^2\\
&=(p_n,xp_{n-1})-\norm{p_n}^2.
\end{align*}
Since the leading coefficients of $p_n,p_{n-1}$ are 1, we can write $xp_{n-1}(x)=p_n(x)+q(x)$, where $q(x)$ is a polynomial of degree at most $n-1$. Clearly $q$ can be expressed as a linear combination of $p_0,p_1,\dots,p_{n-1}$, so $(p_n,q)=0$. Therefore
$$(p_{n+1},p_{n-1})=(p_n,p_n+q)-\norm{p_n}^2=\norm{p_n}^2+(p_n,q)-\norm{p_n}^2=0.$$
Finally, if $m<n-1$, then
\begin{align*}
(p_{n+1},p_m)&=\left(\left(x-\frac{(xp_n,p_n)}{\norm{p_n}^2}\right)p_n-\frac{\norm{p_n}^2}{\norm{p_{n-1}}^2}p_{n-1},p_m\right)\\
&=(xp_n,p_m)-\frac{(xp_n,p_n)}{\norm{p_n}^2}(p_n,p_m)-\frac{\norm{p_n}^2}{\norm{p_{n-1}}^2}(p_{n-1},p_m)\\
&=(p_n,xp_m)=0
\end{align*}
because $xp_m$ is a polynomial of degree $1+m<n$.
\end{proof}

The following lemma shows that an degree $n$ orthogonal polynomial has exactly $n$ real roots (so they are all simple).

\begin{lem}\label{lem:simpleRoot}
$p_n(x)$ has exactly $n$ real roots on $(a,b)$.
\end{lem}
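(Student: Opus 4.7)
The plan is the classical sign-change argument, which combines the orthogonality of $p_n$ against all lower-degree polynomials with a cleverly chosen auxiliary polynomial whose sign changes match those of $p_n$.

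First I would let $x_1,\dots,x_k$ denote the distinct points in $(a,b)$ at which $p_n$ changes sign (equivalently, the roots in $(a,b)$ of odd multiplicity). Since $p_n$ has degree $n$, certainly $k\le n$, so the goal is to rule out $k<n$. I would assume for contradiction that $k<n$ and introduce the auxiliary polynomial
\begin{equation*}
q(x)=(x-x_1)(x-x_2)\cdots(x-x_k),
\end{equation*}
so that $\deg q = k < n$ (with the convention $q\equiv 1$ if $k=0$).

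The key observation is that $p_n(x)q(x)$ does not change sign on $(a,b)$: by construction $q$ changes sign at exactly the same points as $p_n$ does, and any remaining real roots of $p_n$ lying in $(a,b)$ have even multiplicity and hence do not cause sign changes. Therefore $w(x)p_n(x)q(x)$ has constant sign on $(a,b)$ (up to a set of measure zero), and since $w>0$ and $p_nq$ is not identically zero, we get
\begin{equation*}
(p_n,q)=\int_a^b w(x)p_n(x)q(x)\diff x \neq 0.
\end{equation*}
On the other hand, since $\deg q = k < n$ and $\{p_0,p_1,\dots,p_{n-1}\}$ spans the space of polynomials of degree at most $n-1$ (each $p_j$ has degree $j$ with leading coefficient $1$), we may write $q$ as a linear combination of $p_0,\dots,p_{n-1}$. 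By Proposition \ref{prop:TTRR}, $p_n$ is orthogonal to each of these, so $(p_n,q)=0$, a contradiction.

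Hence $k\ge n$, and combined with $k\le n$ this gives $k=n$, meaning $p_n$ has $n$ distinct sign-change roots in $(a,b)$; since a degree $n$ polynomial has at most $n$ roots, these account for all roots and they are simple. The only subtle point I expect to trip over is the justification that $p_n q$ has constant sign: I need to articulate carefully that even-multiplicity real roots of $p_n$ in $(a,b)$ (if any) do not spoil the argument, and that any complex or out-of-interval roots of $p_n$ contribute factors of constant sign on $(a,b)$. This is the main obstacle, but it is routine once the sign-change points are isolated correctly.
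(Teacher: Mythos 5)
Your proposal is correct and follows essentially the same route as the paper: introduce the sign-change points $x_1,\dots,x_k$, form $q(x)=(x-x_1)\cdots(x-x_k)$, observe that $w\,p_n q$ has constant sign so $(p_n,q)\neq 0$, and contradict orthogonality since $\deg q<n$. Your version is slightly more careful about even-multiplicity and non-real roots and about concluding simplicity, but the argument is the same.
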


\begin{proof}
By the fundamental theorem of algebra, $p_n(x)$ has exactly $n$ roots in $\C$. Suppose on the contrary that $p_n(x)$ has less than $n$ real roots on $(a,b)$. Let $x_1,\dots,x_k$ ($k<n$) those roots at which $p_n(x)$ changes its sign. Let $q(x)=(x-x_1)\dotsb(x-x_k)$. Since $p_n(x)q(x)>0$ (or $<0$) almost everywhere on $(a,b)$, we have
$$(p_n,q)=\int w(x)p_n(x)q(x)\diff x\neq 0.$$
On the other hand, since $\deg q=k<n$, we have $(p_n,q)=0$, which is a contradiction.
\end{proof}

The following theorem shows that using the $N$ roots of the degree $N$ orthogonal polynomial $p_N(x)$ as quadrature nodes and choosing specific weights, we can integrate all polynomials of degree up to $2N-1$ exactly. Thus Gaussian quadrature always exists.

\begin{thm}[Gaussian quadrature]\label{thm:GQ}
Let $a<x_1<\dots<x_N<b$ be the $N$ roots of the degree $N$ orthogonal polynomial $p_N$ and define
$$w_n=\int w(x)L_n(x)\diff x$$
for $n=1,\dots,N$, where
$$L_n(x)=\prod_{m\neq n}\frac{x-x_m}{x_n-x_m}$$
is the degree $N-1$ polynomial that takes value 1 at $x_n$ and 0 at $x_m$ ($m\in \set{1,\dots,N}\backslash n$).
Then
\begin{equation}
\int w(x)p(x)\diff x=\sum_{n=1}^Nw_np(x_n)\label{eq:exactInt}
\end{equation}
for all polynomials $p(x)$ of degree up to $2N-1$.
\end{thm}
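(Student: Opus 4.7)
The plan is to exploit the standard trick of dividing an arbitrary degree $\le 2N-1$ polynomial by the orthogonal polynomial $p_N$. Given any polynomial $p(x)$ of degree at most $2N-1$, polynomial long division yields a representation $p(x)=p_N(x)q(x)+r(x)$ where $q$ and $r$ are polynomials of degree at most $N-1$. This is the key algebraic identity and it exists precisely because $\deg p - \deg p_N \le N-1$.

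Next I would integrate both sides against the weight. The cross term $\int w(x) p_N(x) q(x)\,\diff x = (p_N,q)$ vanishes by the orthogonality property from Proposition \ref{prop:TTRR}: since $\deg q \le N-1$, we can expand $q$ as a linear combination of $p_0,\dots,p_{N-1}$, and each inner product $(p_N,p_m)$ with $m<N$ is zero. Therefore $\int w(x) p(x)\,\diff x = \int w(x) r(x)\,\diff x$.

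Now I would handle the right-hand side of \eqref{eq:exactInt} using Lagrange interpolation. Because $p_N(x_n)=0$ at each node, we have $p(x_n)=r(x_n)$, so $\sum_{n=1}^N w_n p(x_n) = \sum_{n=1}^N w_n r(x_n)$. Since $r$ is a polynomial of degree at most $N-1$ and the nodes $x_1,\dots,x_N$ are distinct real numbers (this is exactly Lemma \ref{lem:simpleRoot}, so this step is legitimate), the Lagrange interpolation formula gives $r(x)=\sum_{n=1}^N r(x_n) L_n(x)$ as an identity of polynomials. Integrating against $w$ and using the definition $w_n=\int w(x) L_n(x)\,\diff x$ yields $\int w(x) r(x)\,\diff x = \sum_{n=1}^N w_n r(x_n)$, which closes the chain of equalities.

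There is no real obstacle here once the preceding machinery is in place; the proof is essentially a bookkeeping exercise. The two substantive ingredients are already available: the orthogonality of $p_N$ against all lower-degree polynomials (which kills the quotient term $q$), and the fact that $p_N$ has $N$ simple real roots in $(a,b)$ (which makes Lagrange interpolation at the nodes well-defined). The only place one has to be slightly careful is in justifying that $q$ lies in the span of $p_0,\ldots,p_{N-1}$; this follows because these orthogonal polynomials form a basis of the space of polynomials of degree at most $N-1$ by a dimension count combined with their linear independence (the leading coefficients are all $1$).
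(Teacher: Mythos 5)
Your proposal is correct and follows essentially the same route as the paper's proof: the division $p=p_Nq+r$, killing the quotient term by orthogonality, and Lagrange interpolation of $r$ at the $N$ distinct roots supplied by Lemma \ref{lem:simpleRoot}. The only cosmetic difference is that you integrate the interpolation identity directly, while the paper phrases the last step as verifying exactness on each $L_n$; the content is the same.
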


\begin{proof}
Since $\deg p\le 2N-1$ and $\deg p_N=N$, we can write
$$p(x)=p_N(x)q(x)+r(x),$$
where $\deg q,\deg r\le N-1$. Since $q$ can be expressed as a linear combination of orthogonal polynomials of degree up to $N-1$, we have $(p_N,q)=0$. Hence
$$\int w(x)p(x)\diff x=(p_N,q)+\int w(x)r(x)\diff x=\int w(x)r(x)\diff x.$$
On the other hand, since $\set{x_n}_{n=1}^N$ are roots of $p_N$, we have 
$$p(x_n)=p_N(x_n)q(x_n)+r(x_n)=r(x_n)$$
for all $n$, so in particular
$$\sum_{n=1}^Nw_np(x_n)=\sum_{n=1}^Nw_nr(x_n).$$
Therefore it suffices to show \eqref{eq:exactInt} for polynomials $r$ of degree up to $N-1$. Let us show that
$$r(x)=\sum_{n=1}^Nr(x_n)L_n(x)$$
identically. To see this, let $\tilde{r}$ be the right-hand side. Since $L_n(x_m)=\delta_{mn}$ (Kronecker's delta), we have
$$\tilde{r}(x_m)=\sum_{n=1}^Nr(x_n)L_n(x_m)=\sum_{n=1}^N\delta_{mn}r(x_n)=r(x_m),$$
so $r$ and $\tilde{r}$ agree on $N$ distinct points $\set{x_n}_{n=1}^N$. Since each $L_n(x)$ is a degree $N-1$ polynomial, we have $\deg \tilde{r}\le N-1$. Therefore it must be $r=\tilde{r}$.

Since $r$ can be represented as a linear combination of $L_n$'s, it suffices to show \eqref{eq:exactInt} for all $L_n$'s. But since by definition
$$\int w(x)L_n(x)\diff x=w_n=\sum_{m=1}^Nw_m\delta_{mn}=\sum_{m=1}^Nw_mL_n(x_m),$$
the claim is true.
\end{proof}

In practice, how can we compute the nodes $\set{x_n}_{n=1}^N$ and weights $\set{w_n}_{n=1}^N$ of the $N$-point Gaussian quadrature? The solution is given by the following Golub-Welsch algorithm.

\begin{thm}[\citealp{GolubWelsch1969}]\label{thm:GolubWelsch}
For each $n\ge 1$, define $\alpha_n,\beta_n$ by $$\alpha_n=\frac{(xp_{n-1},p_{n-1})}{\norm{p_{n-1}}^2}, \quad \beta_n=\frac{\norm{p_n}}{\norm{p_{n-1}}}>0.$$
Define the $N\times N$ symmetric tridiagonal matrix $T_N$ as in \eqref{eq:TN}. Then the Gaussian quadrature nodes $\set{x_n}_{n=1}^N$ are eigenvalues of $T_N$. Letting $v_n=(v_{n1},\dots,v_{nn})'$ be an eigenvector of $T_N$ corresponding to eigenvalue $x_n$, then the weights $\set{w_n}_{n=1}^N$ are given by
\begin{equation}
w_n=\frac{v_{n1}^2}{\norm{v_n}^2}\int w(x)\diff x>0.\label{eq:GaussWeight}
\end{equation}
\end{thm}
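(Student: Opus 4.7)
The plan is to symmetrize the three-term recurrence of Proposition \ref{prop:TTRR} using orthonormal polynomials, read off the eigenvalue relation for $T_N$ by evaluating at the roots of $p_N$, and derive the weight formula from the exactness of Gaussian quadrature applied to products of orthonormal polynomials.

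First, I would introduce the orthonormal polynomials $\tilde p_n=p_n/\norm{p_n}$. Substituting $p_n=\norm{p_n}\tilde p_n$ into \eqref{eq:TTRR} and dividing through by $\norm{p_n}$, the recurrence rearranges into the symmetric form
$$x\tilde p_{n-1}(x)=\beta_{n-1}\tilde p_{n-2}(x)+\alpha_n\tilde p_{n-1}(x)+\beta_n\tilde p_n(x)\qquad(n\ge 1),$$
with conventions $\tilde p_{-1}\equiv 0$ (so $\beta_0$ drops out). Evaluating at a Gaussian node $x_k$ (a root of $p_N$, hence of $\tilde p_N$), the identities for $n=1,\dots,N$ assemble into the eigenvector equation $T_N\mathbf u_k=x_k\mathbf u_k$ with $\mathbf u_k=(\tilde p_0(x_k),\dots,\tilde p_{N-1}(x_k))^\top$; the identity for $n=N$ closes the system because the $\beta_N\tilde p_N(x_k)$ term vanishes. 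Lemma \ref{lem:simpleRoot} supplies exactly $N$ simple real roots of $p_N$, so these exhaust the spectrum of $T_N$ and establish the first claim.

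For the weights, I would apply Theorem \ref{thm:GQ} to $\tilde p_j\tilde p_l$, a polynomial of degree at most $2N-2$, to obtain the discrete orthogonality
$$\sum_{n=1}^N w_n\tilde p_j(x_n)\tilde p_l(x_n)=(\tilde p_j,\tilde p_l)=\delta_{jl},\qquad 0\le j,l\le N-1.$$
In matrix form, with $V_{jn}=\tilde p_{j-1}(x_n)$ and $D=\mathrm{diag}(w_1,\dots,w_N)$, this reads $VDV^\top=I$. Since $V$ is square, inverting gives $V^\top VD=I$; its diagonal entries yield $\sum_{j=0}^{N-1}\tilde p_j(x_n)^2=1/w_n$. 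Because any eigenvector $v_n$ of $T_N$ associated with the simple eigenvalue $x_n$ is a scalar multiple of $\mathbf u_n$,
$$\frac{v_{n1}^2}{\norm{v_n}^2}=\frac{\tilde p_0(x_n)^2}{\sum_{j=0}^{N-1}\tilde p_j(x_n)^2}=\tilde p_0(x_n)^2\,w_n=\frac{w_n}{\norm{p_0}^2}=\frac{w_n}{\int w(x)\diff x},$$
using $\tilde p_0=1/\norm{p_0}$ and $\norm{p_0}^2=\int w(x)\diff x$. Rearranging delivers \eqref{eq:GaussWeight}.

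The main obstacle is the inversion step $VDV^\top=I\Rightarrow V^\top VD=I$, which requires $V$ to be square and nonsingular: squareness follows from having $N$ nodes and $N$ polynomial degrees $0,\dots,N-1$, and nonsingularity from the linear independence of $\tilde p_0,\dots,\tilde p_{N-1}$ evaluated at $N$ distinct points. A secondary technical issue is handling the boundary indices of the symmetrized recurrence carefully, namely the top row ($n=1$, where $\tilde p_{-1}$ is absent) and the bottom row ($n=N$, where $\tilde p_N(x_k)=0$ is what closes the system); the middle rows are routine.
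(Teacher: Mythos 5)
Your proposal is correct and follows essentially the same route as the paper's proof: symmetrize the three-term recurrence via the orthonormal polynomials, evaluate at the roots of $p_N$ to obtain the eigenvalue relation, and use the degree-$(2N-2)$ exactness of the quadrature to get the discrete orthogonality $VDV^\top=I$, from which the weight formula follows. The only cosmetic difference is that you justify the invertibility of $V$ and the exhaustion of the spectrum via the simplicity of the roots (Lemma \ref{lem:simpleRoot}), whereas the paper reads invertibility of $P$ and $W$ directly off the identity $PWP'=I$; both are valid.
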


\begin{proof}
By \eqref{eq:TTRR} and the definition of $\alpha_n,\beta_n$, for all $n\ge 0$ we have
$$p_{n+1}(x)=(x-\alpha_{n+1})p_n(x)-\beta_n^2 p_{n-1}(x).$$
Note that this is true for $n=0$ by defining $p_{-1}(x)=0$ and $\beta_0=0$. For each $n$, let $p_n^*(x)=p_n(x)/\norm{p_n}$ be the normalized orthogonal polynomial. Then the above equation becomes
$$\norm{p_{n+1}}p_{n+1}^*(x)=\norm{p_n}(x-\alpha_{n+1})p_n^*(x)-\norm{p_{n-1}}\beta_n^2p_{n-1}^*(x).$$
Dividing both sides by $\norm{p_n}>0$, using the definition of $\beta_n,\beta_{n+1}$, and rearranging terms, we obtain
$$\beta_np_{n-1}^*(x)+\alpha_{n+1}p_n^*(x)+\beta_{n+1}p_{n+1}^*(x)=xp_n^*(x).$$
In particular, setting $x=x_k$ (where $x_k$ is a root of $p_N$), we obtain
$$\beta_np_{n-1}^*(x_k)+\alpha_{n+1}p_n^*(x_k)+\beta_{n+1}p_{n+1}^*(x_k)=x_kp_n^*(x_k).$$
for all $n$ and $k=1,\dots,N$. Since $\beta_0=0$ by definition and $p_N^*(x_k)=0$ (since $x_k$ is a root of $p_N$ and hence $p_N^*=p_N/\norm{p_N}$), letting $P(x)=(p_0^*(x),\dots,p_{N-1}^*(x))'$ and collecting the above equation into a vector, we obtain
$$T_NP(x_k)=x_kP(x_k)$$
for $k=1,\dots,N$. Define the $N\times N$ matrix $P$ by $P=(P(x_1),\dots,P(x_N))$. Then $T_NP=\diag(x_1,\dots,x_N)P$, so $x_1,\dots,x_N$ are eigenvalues of $T_N$ provided that $P$ is invertible. Now since $\set{p_n^*}_{n=0}^{N-1}$ are normalized and Gaussian quadrature integrates all polynomials of degree up to $2N-1$ exactly, we have
$$\delta_{mn}=(p_m^*,p_n^*)=\int w(x)p_m^*(x)p_n^*(x)\diff x=\sum_{k=1}^Nw_kp_m^*(x_k)p_n^*(x_k)$$
for $m,n\le N-1$. Letting $W=\diag(w_1,\dots,w_N)$, this equation becomes $PWP'=I$. Therefore $P,W$ are invertible and $x_1,\dots,x_N$ are eigenvalues of $T_N$. Solving for $W$ and taking the inverse, we obtain
$$W^{-1}=P'P\iff \frac{1}{w_n}=\sum_{k=0}^{N-1}p_k^*(x_n)^2>0$$
for all $n$. To show \eqref{eq:GaussWeight}, let $v_n$ be an eigenvector of $T_N$ corresponding to eigenvalue $x_n$. Then $v_n=cP(x_n)$ for some constant $c\neq 0$. Taking the norm, we obtain
$$\norm{v_n}^2=c^2\norm{P(x_n)}^2=c^2\sum_{k=0}^{N-1}p_k^*(x_n)^2=\frac{c^2}{w_n}\iff w_n=\frac{c^2}{\norm{v_n}^2}.$$
Comparing the first element of $v_n=cP(x_n)$, noting that $p_0(x)=1$ and hence $p_0^*=p_0/\norm{p_0}=1/\norm{p_0}$, we obtain
$$c^2=v_{n1}^2\norm{p_0}^2=v_{n1}^2\int w(x)p_0(x)^2\diff x=v_{n1}^2\int w(x)\diff x,$$
which implies \eqref{eq:GaussWeight}.
\end{proof}

\end{document}